\theoremstyle{plain}
\newtheorem{theorem}{Claim}
\theoremstyle{definition}
\newcommand{\ket}[1]{|#1\rangle}
\newcommand{\braket}[2]{\langle#1|#2\rangle}
\DeclareMathOperator{\CZ}{CZ}
\DeclareMathOperator{\FX}{XNOT}
\DeclareMathOperator{\CNOT}{CNOT}
\DeclareMathOperator{\IAC}{IAC}
\DeclareMathOperator{\IEC}{IEC}
\begin{document}

\title{Fast graph operations in quantum computation}
\author{Liming Zhao}
\affiliation{Singapore University of Technology and Design, 20 Dover Drive, Singapore 138682}
\author{Carlos A. P\'erez-Delgado}
\affiliation{Singapore University of Technology and Design, 20 Dover Drive, Singapore 138682}
\author{Joseph F. Fitzsimons}\email{joseph_fitzsimons@sutd.edu.sg}
\affiliation{Singapore University of Technology and Design, 20 Dover Drive, Singapore 138682}
\affiliation{Centre for Quantum Technologies, National University of Singapore, 3 Science Drive 2, Singapore 117543}
\begin{abstract}
The connection between certain entangled states and graphs has been heavily studied in the context of measurement-based quantum computation as a tool for understanding entanglement. Here we show that this correspondence can be harnessed in the reverse direction to yield a graph data structure which allows for more efficient manipulation and comparison of graphs than any possible classical structure. We introduce efficient algorithms for many transformation and comparison operations on graphs represented as graph states, and prove that no classical data structure can have similar performance for the full set of operations studied.
\end{abstract}
\maketitle

\section{Introduction}

A key motivation for the development of quantum information processing technologies is the promise of faster algorithms for computational problems. Such algorithms have been exemplified by the discovery of a polynomial time quantum algorithm for factoring integers \cite{shor1}, as well as algorithms for search \cite{grover1997} and other problems which offer a provable advantage over their classical counterparts. Despite these successes, progress in developing quantum algorithms remained relatively slow for a number of years. A common feature of many early quantum algorithms is that although they all act in quantum ways procedurally, the data acted upon is essentially classical. More recently, several algorithms have emerged which achieve greater efficiency than their classical counterparts by representing problem instances in a fundamentally quantum manner within the computation, as is the case in the quantum algorithms for linear algebra \cite{harrow2009quantum} and estimating knot invariants \cite{aharonov2009polynomial}. As the representation of the problem in terms of quantum states appears fundamental to the success of these algorithms, the question then arises as to what uniquely quantum data structures exist.

In this paper we explore data structures for graphs based on quantum states which confer a provable advantage over classical data structures in terms of the efficiency with which the graphs can be manipulated and compared. We draw on the well studied connection between graphs and certain entangled states, known as graph states, which have been heavily studied in the context of measurement-based quantum computation \cite{raussendorf2001one,raussendorf2003measurement,briegel2009measurement,browne2006oneway}. However, whereas most prior work has explored  the relationship from the point of view of utilising graphs to represent quantum states \cite{hein2004multiparty,van2004graphical,van2004efficient,hein2006entanglement}, we proceed in the reverse direction and consider the use of graph states as a representation of graphs, an approach which has so far received little attention.

\section{Notation and preliminary definitions}

In this paper we use the standard notation for graphs commonly used in both physics and computer science. We understand a graph to be a tuple $G = \{V,E\}$, where the first member $V$ is a set of vertices and the second member is a subset of the cartesian product of the first with itself, $E \subseteq V \times V$, corresponding to the set of edges. We will focus only on \emph{undirected} graphs, where if $\{v_i, v_j\} \in E$ then also $\{v_j, v_i\} \in E$.
For any vertex $v_0 \in V$, we will call $v_i$ a \emph{neighbour} of $v_0$ if $\{v_0,v_i\} \in E$. The set of all neighbours of a vertex $v_0$ is referred to as its neighbourhood and is denoted by $N(v_0)$. The number of edges incident on a vertex $v_0$ is called its \emph{degree}.
Sometimes, in graph theory, graphs with multiple edges between the same pair of vertices, or with self-loops (edges from one vertex to itself) are considered. These are often referred to as \emph{multigraphs}, whereas graphs without self-loops ore multiple edges are called \emph{simple graphs}. Here, we will consider multigraphs with (singular) self-loops, but will not consider multiple edges.

 Given a graph $G$, a quantum graph state $\ket{G}$ representing $G$ is a pure quantum state on $N = |V|$ qubits, such that 
\begin{equation}\label{eq:one}
 K_{G}^{v}\vert G\rangle=\vert G\rangle\quad \forall v \in V, 
 \end{equation} 
  where $K_{G}^{v}=X_{v}\prod_{u\in N(v)} Z_u$, where $X_u$ ($Z_u$) denotes the Pauli $X$ (Pauli $Z$) operator acting on the qubit labelled $u$ \cite{hein2004multiparty}. The set of operators $K_{G}^{v}$ generate a group, called the \emph{stabiliser}, of operators that act trivially on $\ket{G}$.
  
Intuitively, the most common way to think of a graph state is to picture an $N$-qubit system, one qubit representing each vertex, initially in the state $\ket{+}^{\otimes N}$ (where $\ket{+} = \frac{1}{\sqrt{2}}\ket{0} + \frac{1}{\sqrt{2}}\ket{1}$), to which are then applied controlled-$Z$ ($\CZ$) operators between every two qubits representing neighbouring vertices in the graph $G$. In fact, one can prepare graph states in such a way. This is known as the constructive definition of a graph state \cite{browne2006oneway}, and is equivalent to the stabiliser definition. From this latter definition it is tempting to believe that different graphs necessarily lead to states which are entangled in different ways. This is not the case, however, as graph states generated from different graphs may be equivalent up to local operations. If a graph state $\ket{\tilde{G}}$ can be obtained from a graph state $\ket{G}$ by applying solely local Clifford group operators, then the two graph states are said to be LC-equivalent \cite{van2005local}.

In this paper we will extend the usual definitions of graph states to include graphs with edges which connect a vertex to itself. In the constructive definition, such a self-edge will take the form of a $Z$ gate applied to the qubit corresponding to the vertex in question. Since this is effectively a CZ gate with the same control and target qubit, we will use the convention $\text{CZ}_{i,i}$ is taken to mean $Z_i$. The generators of the stabilizer for a graph containing self-edges are then given by
\begin{equation}\label{eq:stabgen}
K_{G}^{v}=(-1)^{|N(v) \cap \{v\}|}X_{v}\prod_{u\in N(v)\setminus u} Z_u.
 \end{equation}
 
In this paper we will be using heavily an operation not usually studied in classical graph data-structures, that we refer to as \emph{edge-complementation.} This operation adds an edge $(u,v)$ if it  does not currently exist in the graph, and removes if it does. We will find it useful to introduce mathematical notation for this operation. If $G = \{V,E\}$, then $G' = \{V,E \oplus (u,v)\}$ will be understood to be the resulting graph after complementing the edge $(u,v)$ in $G$.

\section{Representation of data}

Given the one to one correspondence between graph states and graphs, it is natural to think of such states as a quantum representation of the corresponding graph. A barrier to directly using such states in place of classical data structures, such as adjacency matrices or edge lists \cite{mchugh1990algorithmic}, is that graph states are not orthogonal to one another, and hence it is not possible to manipulate them in an arbitrary manner. In order to operate non-destructively on non-orthogonal quantum states, it is necessary that the applied operations act unitarily on the space spanned by the graph states. Thus, only reversible manipulations of the underlying graph are allowed. Furthermore, since the operations cannot change the inner product between graph states, not all reversible manipulations of the graph are allowed. Despite these constraints, however, the structure of graph states allows for a wide range of manipulations of the underlying graph via unitary gates. 

In fact, since a graph state requires only $N$ qubits to represent any $N$ vertex graph, it is more efficient than any classical data structure, while still allowing for comparison operations both between graphs and between subsets of vertices within the same graph. This is because any classical data structure which can uniquely identify an arbitrary $N$-vertex undirected graph requires at least one bit per edge in the graph, and so requires a minimum of $N^2$ bits.

In what follows we present the basic manipulations and comparison operations possible on graphs represented as graph states. In counting the resources required for each operation we consider the number of elementary gates required in the circuit model, for an arbitrary universal gate set of one- and two-qubit gates with no locality restrictions.

\subsection{Preparation of graph states}

Given a classical description of a graph $G=(V,E)$, the constructive definition of graph states provides a prescription for how to prepare the corresponding quantum state: prepare $|V|$ qubits in state $\ket{+}$, each corresponding to a vertex in the graph, and then apply $\CZ$ operations between any pair corresponding to an element of $E$. This takes a total of $O(|V|+|E|)$. Many graph states can, in fact, be prepared more efficiently by making use of some combination of local complementation and edge operations as described later. Perhaps the best example of this is the graph state corresponding to the complete graph. Following the above procedure required $O(|V|^2)$ operations, whereas the same state can be prepared in $O(|V|)$ steps by applying the intraset complementation procedure described in Section \ref{sec:intra-comp}. This should not come as a surprise, since graph state in question is locally equivalent to a GHZ state \cite{hein2004multiparty}.

The efficient preparation of graph states corresponding to predetermined graphs, using either deterministic \cite{mhalla2004complexity,hoyer2006resources,cabello2011optimal} or probabilistic operations \cite{barrett2005efficient,benjamin2005comment,duan2005efficient,chen2006efficient,benjamin2006brokered,campbell2007adaptive,campbell2007efficient,browne2008phase,matsuzaki2010probabilistic,fujii2010fault,li2010fault}, has been heavily studied in the literature. However, any process which determines the sequence of quantum operations necessary to construct the graph state after first reading in the classical description of the graph is fundamentally limited in efficiency once both classical and quantum operations are taken into account. This is because there are $\binom{|V|^2}{|E|}$ graphs with the same number of vertices and edges as $G$, in the worst case it is necessary to read at least $\log_2 \binom{|V|^2}{|E|}$ bits, and so the efficiency of the process cannot be faster than quadratic in the number of vertices for an arbitrary graph.

It is, however, possible to circumvent this limit to prepare arbitrary graph states with $O(|V|)$ operations provided that the initial classical representation of the graph is chosen appropriately. To see this we consider the problem of constructing a graph using queries to a oracle $\mathcal{O}$. We define $\mathcal{O}$ as a unitary operator which acts on a system of $|V|+1$ qubits, such that given a subset $S$ of $V$, specified by the location of ones in the first $|V|$ qubits, the oracle applies $X^{|E_S|}$ to the final qubit, where $E_S$ is the edge set of the subgraph of $G$ induced by $S$. It is possible to take advantage of the structure of $\mathcal{O}$ to create $\ket{G}$ using a single query with the state $\ket{+}^{\otimes |V|}\otimes \ket{-}$. To see this note that for a computational basis state $\ket{S}$, $\braket{S}{G} = -\sqrt{2^{|V|}}$ if the number of $\CZ$ operations applied between the qubits corresponding to the location of ones in $\ket{S}$ is odd, and otherwise $\braket{S}{G} = \sqrt{2^{|V|}}$. Thus $\ket{S}$ can be thought of as specifying a subset of the vertices of $V$ through the location of ones. Querying $\mathcal{O}$ with $\ket{+}^{\otimes |V|}\otimes \ket{-}$ results in 
\[
\mathcal{O} \ket{+}^{\otimes |V|}\otimes \ket{-} = 2^{\frac{|V|}{2}} \sum_s (-1)^{|E_S|} \ket{S},
\]
which is equal to $\ket{G}$.

\subsection{Basic manipulation}

It has previously been established that all stabilizer states are locally equivalent to graph states \cite{hein2006entanglement}, and hence Clifford operations give rise to transitions between graph states up to local operations. In this section we explore a set of basic operations which map graph states onto other graph states. In all cases these operations can be identified with a simple transformation of the graph, which can be viewed in terms of edge complementation operations. In the next section these operations will be used to construct compound operations for achieving natural operations on the graph.

\subsubsection{Tensor product}
Perhaps the simplest operation which can be performed is to consider the joint state of two graph states represented by distinct sets of qubits. From the definition of graph states, it immediately follows that the joint state of such a system is itself a graph state, with the associated graph corresponding to the union of the two graphs underlying the initial graph states. Here the set of vertices in each graph is taken to be distinct. Thus we have the rule
\begin{align}
\otimes: ~~ V\mapsto V_1 \cup V_2, ~E \mapsto E_1 \cup E_2,
\end{align}
where the two initial graphs are taken to be $G_1 = (V_1,E_1)$ and $G_2 = (V_2,E_2)$.

Since this operation applies for any pair of graphs, vertices can be added to a graph simply by preparing ancilla qubits in the state $\ket{+}$, which corresponds to graph state for a single disconnected vertex. Barring considerations like limited memory, garbage collection, or quantum specific issues like cooling and fault tolerance this operation can be reasonably be expected to take $O(1)$ in most quantum computing architectures.

On the other hand, removal of a vertex cannot be accomplished deterministically, since this is a non-reversible operation. However, it is possible to delete a vertex with constant probability of success. In order to delete a vertex we proceed by measuring it in the $Z$ basis. If the measurement output is $0$, then the operation succeeds. Otherwise, in order to bring the state of the data structure back into a graph state one would need to apply a $Z$ correction on all neighbouring vertices. This is simply an application of the Pauli measurement rules studied in \cite{hein2004multiparty}. Without access to a classical structure (or some other method) detailing which vertices neighbour which, the $Z$ correction becomes impossible. Thus, we consider vertex deletion to be a probabilistic method. If one expects the need to delete vertices, then multiple copies of the graph state would usually be needed. However, as we shall see later, there are certain circumstances in which this operation is useful as part of more complex deterministic operations.

\subsubsection{Pauli operations}
In our extension of graph states to include self-loops, such loops were represented in the constructive definition by an additional $Z$ gate applied to the qubit corresponding to the vertex in question. Thus, since Pauli operators are self-adjoint, applying $Z_a$ to the graph has the effect of adding a self-loop if one is not present on vertex $a$, and removing one if it was initially present. We will refer to this graph operation as \emph{loop-complementation}. Furthermore, since the stabilizer contains an element $K_G^a$, which is proportional to the product of $X_a$ with $Z$ operators on the neighbourhood of $a$ (less any self-loops), applying $X_a$ is equivalent, up to global phase, to applying $Z_v$ for every $v\in N(a)\setminus a$. The effect of this on the underlying graph is to perform loop-complementation on the neighbourhood of $a$, excluding $a$ itself. Since $Y_a$ is proportional to the product of $Z_a$ and $X_a$, the effect of $Y_a$ on the underlying graph is simply to apply loop-complementation to every vertex in $N(a)\cup a$. These operations can be expressed via their action on the vertex and edge sets of the underlying graph as follows:
\begin{align}
X_a:&~~ V\mapsto V, ~E\mapsto E \bigoplus_{v\in N(a)\setminus a} (v,v)\\
Y_a:&~~ V\mapsto V, ~E\mapsto E \bigoplus_{v\in N(a)\cup a} (v,v)\\
Z_a:&~~ V\mapsto V, ~E\mapsto E\oplus (a,a)
\end{align}

\subsubsection{$\CZ$ operations}
The next operation we consider is a $\CZ$ gate applied between qubits $a$ and $b$ in the graph state. When considered from the point of view of the constructive definition, the application of a $\CZ$ operation to an existing graph state simply cancels the $\CZ$ from a pre-existing edge if present, removing it. Otherwise, it can simply be considered an extension of the state preparation corresponding to the addition of an edge between the chosen vertices. This is because the $\CZ$ operation is self-adjoint, and hence performing two such operations between a pair of qubits is equivalent to performing the identity operation. Thus the $\CZ$ operation implements the transformation
\begin{align}
\CZ_{a,b}: & ~~V \mapsto V, ~E \mapsto E \oplus (a,b).
\end{align}

This $\CZ$ operation then corresponds to the operation of \emph{edge complementation}, which interchanges the presence and absence of a particular edge. Aside from complementing the edge between a pair of vertices, it is also possible to efficiently complement larger sets of edges efficiently, as we shall see next.

\subsubsection{$\CNOT$ operations}
Given two vertices $a$ and $b$, the effect of a $\CNOT$ operator controlled by $a$ and targeted on $b$ is to complement the edges between $a$ and the neighbourhood of $b$ (other than $b$ itself). In order to prove that this is indeed the case, we consider the stabiliser for an arbitrary graph $G$. If the corresponding graph state $\ket{G}$ undergoes a unitary transformation $U$, the generators of the stabiliser group for the new state $U\ket{G}$ are given by $UK_G^v U^\dagger$ for $1\leq v\leq N$ \cite{browne2006oneway}. Since the unitary transformation applied in this case is a CNOT controlled by the qubit corresponding to vertex $a$ and targeted on the qubit corresponding to vertex $b$, the only generators of the stabiliser which are altered by this operation are those which either act as $Z$ on vertex $b$ (those for which $v\in N(b)$) or as $X$ on vertex $a$ (the single case of $v=a$). Thus, the transformed generators are given by
\begin{eqnarray*}
\tilde{K}_a &=&  (-1)^{|N(a)\cap \{a\}|}X_{a} \otimes X_b \left(\prod_{u\in N(a)\setminus b} Z_u\right)\left( \prod_{v\in N(a)\cap b} Z_a\otimes Z_v\right) \\
\tilde{K}_b &=& (-1)^{|N(b)\cap \{b\}|}X_b \prod_{u\in N(b)} Z_u\\
\tilde{K}_c &=& (-1)^{|N(c)\cap \{c\}|}X_c Z_a \prod_{u\in N(c)} Z_u  ~~~ \text{for }c \in N(b) \setminus (a\cup b) \\
\tilde{K}_d &=& (-1)^{|N(d)\cap \{d\}|}X_d \prod_{u\in N(d)} Z_u ~~~ \text{for }d \notin N(b)\cup a\cup b.
\end{eqnarray*}

Now, consider the graph $G'$ obtained from $G$ by complementing the edges between $a$ and the neighbours of $b$, other than $b$, and adding a self-loop to $a$ if $b$ has a self-loop. In this case $\tilde{K}_b = K^b_{G'}$, $\tilde{K}_c = K^c_{G'}$ and $\tilde{K}_d = K^d_{G'}$. Furthermore, 
\begin{multline}
\tilde{K}_a \times \tilde{K}_b = (-1)^{\gamma}X_{a} \left(\prod_{u\in (N(a)\Delta N(b))\setminus \{a,b\}} Z_u\right) = K^a_{G'},
\end{multline}
where $\gamma = |N(a)\cap \{a\}| + |N(b)\cap \{b\}|$. Hence, $\CNOT_{a,b} \ket{G} = \ket{G'}$ as stated. It is worth noting that the presence of a $(-1)$ in the stabiliser corresponding to a vertex $v$ denotes the presence of a self-loop on $v$. Thus, a CNOT operator acting vertices $a$ and $b$ affects the vertex and edge sets of the underlying graph according to the following rule:
\begin{equation}
\CNOT_{a,b} : ~~V \mapsto V,
 ~E \mapsto
  E \bigoplus_{v \in N(b) \setminus b } (a,v) \bigoplus_{v\in N(b)\cap b} (a,a) 
\end{equation}

\subsubsection{$\FX$ operations}
The final basic manipulation  we look at is the $\FX$ quantum operator, defined as $\FX_{a,b} = (H_a\otimes H_b) \CZ_{a,b} (H_a\otimes H_b)$.
Unlike previous basic manipulation operations, $\FX_{a,b} (G)$ is only well defined when the vertices $a$ and $b$ are \emph{not} neighbours in $G$. If $a$ and $b$ \emph{are} neighbours in $G$ then $\FX_{a,b}$ maps $G$ outside the state of valid graph states, and is therefore left undefined as an operation on graphs. However, if $(a,b)$ is \emph{not} an edge in $G$, the operation has the effect of  complementing all edges between the neighbourhoods of the two vertices, $a$ and $b$. 

More formally, take $C = (N(a)\setminus (N(b) \cup \{a\}))$, $D = (N(b)\setminus (N(a)\cup\{b\}))$ and $F = (N(a)\cap N(b)\setminus (\{a\}\cup\{b\}))$, and assume that there is no edge between vertices $a$ and $b$. 
Then, for every $v \in C \cup F$ and every $u \in D \cup F$, the operator $\FX$ complements the edge $(v,u)$. This, implies that all $v \in F$ will be self-loop complemented. The operator $\FX$ also has the effect that if $a$ ($b$) has a self-loop then  all elements of $D \cup F$ ($C \cup F$) will be self-loop complemented. Also, because every edge $(v,u)$ for $u,v \in F$ is acted on \emph{twice}, the net effect is that no edge is complemented in this case.
In short, the operator $\FX$ has the following effect regarding self-loops. If neither $a$ nor $b$ have self-loops, then $\FX$ complements self-loops on $F$. If $a$ ($b$) has a self-loop but not $b$ ($a$), then self-loops are complemented on $D$ ($C$). Finally, if both $a$ and $b$ have self-loops then self-loops are complemented on all of $C$, $D$ and $F$.

The above can proven as follows. Since there is no edge between $a$ and $b$, $K_G^a$ and $K_G^b$ commute with $\FX$ and hence are also generators of the stabiliser of $FZ_{ab}\ket{G}$. Therefore, $\tilde{K}_a = K_G^a$ and $\tilde{K}_b = K_G^b$. The same is true for $K_G^u$ for all $u \notin N(a)\cup N(b)$. Thus the only generators of the stabiliser altered by $\FX_{a,b}$ are those corresponding to vertices in $C$, $D$ or $E$.

For a vertex $c \in C$, $K_G^c$ is transformed to 
\begin{eqnarray*}
\tilde{K}_c &=& (-1)^{|N(c)\cap \{c\}|} X_c X_b \prod_{u \in N(c)} Z_u\\
&=& (-1)^{|N(c)\cap \{c\}| + |N(b)\cap \{b\}|} \tilde{K}_b X_c \prod_{u \in N(c)} Z_u \prod_{v \in N(b) \setminus b} Z_v.
\end{eqnarray*}
Similarly, for $d\in D$
\begin{eqnarray*}
\tilde{K}_d &=& (-1)^{|N(d)\cap \{d\}|} X_d X_a \prod_{u \in N(d)} Z_u\\
&=& (-1)^{|N(d)\cap \{d\}| + |N(a)\cap \{a\}|} \tilde{K}_a X_d \prod_{u \in N(d)} Z_u \prod_{v \in N(a)} Z_v,
\end{eqnarray*}
and for $f\in F$
\begin{eqnarray*}
\tilde{K}_f &=& (-1)^{|N(f)\cap \{f\}| } X_f Y_a Y_b \prod_{u \in N(f) \setminus (a \cup b)} Z_u\\
&=& (-1)^{\gamma}\tilde{K}_a \tilde{K}_b X_f \prod_{u \in N(f)} Z_u \prod_{v \in N(a)} Z_v \prod_{w \in N(b)} Z_w,
\end{eqnarray*}
where $\gamma = 1+  |N(f)\cap \{f\}| + |N(a)\cap \{a\}| + |N(b)\cap \{b\}|$. Taking $G'$ to be the graph obtained from $G$ by complementing every edge which connects a vertex in $N(a)$ to a vertex in $N(b)$, and performing the self-loop complementations described before, we have
\begin{eqnarray*}
K_{G'}^a &=& \tilde{K}_a\\
K_{G'}^b &=& \tilde{K}_b\\
K_{G'}^u &=&\tilde{K}_u ~~~ \text{for } u\notin N(a) \cup N(b) \\
K_{G'}^c &=& \tilde{K}_b \tilde{K}_c ~~~ \text{for } c\in C \\
K_{G'}^d &=& \tilde{K}_a \tilde{K}_d ~~~ \text{for } d\in D \\
K_{G'}^f &=& \tilde{K}_a \tilde{K}_b \tilde{K}_f ~~~ \text{for } f\in F.
\end{eqnarray*}

Hence, $\FX_{a,b}$ has the following effect,
\begin{align}
&\FX_{a,b} :~~ V \mapsto V, \notag\\
&E \mapsto E \bigoplus_{\substack{v \in C \cup F\\u \in D \cup F}} (v,u) \bigoplus_{\substack{v \in C \cup F\\ u\in N(b)\cap \{b\}}} (v,v) \bigoplus_{\substack{v \in D \cup F\\  u\in N(a)\cap \{a\}}} (v,v) 
\end{align}

However, it is important to note that this operation only results in a valid graph state provided that there is no edge between vertices $a$ and $b$. If such an edge exists, it must be removed before the $\FX$ is applied and added again afterwards, so that the total operation applied is $\CZ_{a,b} \FX_{a,b} \CZ_{a,b}$. This means that when the presence of an edge is known, such as may be the case when dealing with bipartite graphs, it is possible to complement the neighbourhoods of two vertices with a single operation.

\subsection{Ancilla driven manipulations}

It is possible to extend the set of useful operations by temporarily enlarging the graph, and then carefully choosing operations so that the new vertices are disconnected from the rest of the graph at the end of the manipulation. In such a case, the qubits corresponding to the ancillary vertices are left in a product state with the rest of the graph state, and can be discarded while leaving the remain qubits in a valid graph state.

\subsubsection{Interset complementation}
Given two disjoint sets of vertices $S_1$ and $S_2$, it is possible to implement an operator $\IAC$ that complements the edges connecting vertices in $S_1$ to vertices $S_2$ using only $O(|S_1|+|S_2|)$ operations. 
\begin{equation}
\IAC_{S_1,S_2} :~~ V \mapsto V,~~
  E \mapsto 
  E \bigoplus_{v \in S_1, u \in S_2} (v,u). 
\end{equation}

To achieve this, two ancillary vertices, $a$ and $b$, are added to the graph. Edge complementation is then performed between $a$ and each vertex in $S_1$. Since $a$ is initially disconnected, this has the result of adding an edge from $a$ to each vertex in $S_1$. Similarly, edge complementation is performed between $b$ and each vertex in $S_2$. After these operations have been applied, $N(a) = S_1$ and $N(b) = S_2$. Since $a\notin N(b)$ and $b\notin N(a)$, applying $\FX_{a,b}$ complements the edges between $S_1$ and $S_2$.   If edge complementation is then performed between $a$ and $S_1$ and between $b$ and $S_2$, all edges to vertices $a$ and $b$ are removed, resulting in the desired set-set complementation being performed. The ancilla vertices have no incident edges, and hence from the constructive definition of graph states must be in the state $\ket{+}$, unentangled with the rest of the graph. Thus the ancilla vertices can be deterministically removed from the graph simply by removing the corresponding qubits from the graph state.

\subsubsection{Intraset complementation\label{sec:intra-comp}}
Given a set of vertices $S$, it is possible to implement an operator $\IEC$ that complements edges between all distinct vertices in $S$ using only $O(|S|)$ operations:
\begin{equation}
\IEC_{S} :~~ V \mapsto V,~~
  E \mapsto 
  E \bigoplus_{v \neq u \in S} (v,u). 
\end{equation}

In order to achieve this, we make use of the local complementation operation \cite{van2004graphical,hein2004multiparty},
\begin{equation*}
U_a=\sqrt{-iX}_{a}\bigotimes_{b \in N(a)} \sqrt{iZ}_b.
\end{equation*}
When applied to a graph state $\ket{G}$, this operation results in a new graph state $\ket{G'}$ where $G'$ is the graph obtained from $G$ by complementing  edges between all pairs of distinct vertices in $N(a)\setminus a$. At the same time, if $a$ has a self-loop, then all vertices in $N(a)\setminus a$ are self-loop complemented. Due to the presence of $N(a)$ in the description of $U_a$, this operation cannot generally be used to implement local complementation on a graph state, unless $N(a)$ is known.

\begin{figure}[!t] 
\includegraphics{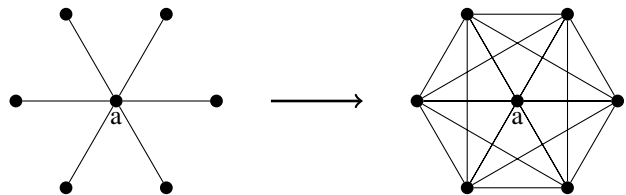}
\caption{Local complementation}
\end{figure}

Intraset complementation can be achieved by adding a new vertex $a$, and then performing edge complementation to each vertex in $S$. As in the interset complementation procedure, this has the effect of adding an edge between $a$ and each vertex in $S$, and hence $N(a)=S$. Then, $U_a$ can be applied, since $N(a)$ is known. Since this performs local complementation about the vertex $a$, this has the result of complementing all edges between all pairs of distinct vertices in $S$. Edge complementation can then be performed between $a$ and each vertex in $S$ to remove all edges incident on $a$, allowing the qubit corresponding to $a$ to be removed from the graph state, as before. The result, then, is a new graph state where all edges between distinct vertices in $S$ have been complemented.

\subsection{Comparison operations}

\subsubsection{Comparing graphs}

Given two graph states representing graphs $G_1$ and $G_2$, an important operation is to determine whether or not these two graphs are equal. As the graph states used to represent graphs are not orthogonal, this cannot be achieved deterministically. However, since every graph state is a stabiliser state, the overlap between graph states corresponding to distinct graphs, $G$ and $G'$, is bounded by $|\braket{G'}{G}|\leq \frac{1}{\sqrt{2}}$ \cite{aaronson2004improved}. As a result, it is possible to determine whether two graphs are equal or not, with constant bounded error, using $O(N)$ operations. 

The graphs represented by two graph states $\ket{G}$ and $\ket{G'}$ are necessarily different if the number of qubits comprising each state is different, since there is a one to one correspondence between qubits and vertices. As such, it suffices to provide a test only for graphs of equal size. This is achieved by means of a controlled-SWAP test \cite{NielsenChuang} as follows. An ancilla qubit is prepared in the state $\ket{+}$, and then for each vertex $i$ in $G$ a controlled-swap (Fredkin gate \cite{NielsenChuang}) gate is applied, controlled by the ancilla qubit and acting on the $i$th qubit of $\ket{G}$ and $\ket{G'}$. Together, these controlled-swap gates have the effect of swapping $\ket{G}$ and $\ket{G'}$ if the ancilla qubit is in state $\ket{1}$, and acting as the identity if the ancilla is in state $\ket{0}$. The ancillary qubit is then measured in the $X$ basis, and the graphs are deemed equal if the measurement results in $\ket{+}$.

In the case where $G=G'$, the controlled-swap procedure leaves the system invariant, since swapping the graph states is equivalent to the identity operations, and hence the ancilla is left unchanged in the state $\ket{+}$. Thus if two graph states are equal, this test will always result in a measurement of $\ket{+}$. If however $G\neq G'$, the above procedure will result in $\ket{-}$ with probability $\frac{1-|\braket{G'}{G}|^2}{2}$. Since $|\braket{G'}{G}|\leq \frac{1}{\sqrt{2}}$, this comparison operation yields an incorrect result with probability at most $\frac{3}{4}$. Thus it is possible to compare graphs via their corresponding graph states with one-sided error of at most $\frac{3}{4}$.

\subsubsection{Automorphism testing}

Given a graph state $\ket{G}$ corresponding to graph $G$, and a permutation of vertices $P$, it is possible to determine whether $P$ is an automorphism on $G$ with constant one-sided error, using a similar approach to the graph comparison test. In order to do this, a single ancilla is prepared in state $\ket{+}$, and a permutation of the qubits according to $P$, conditioned on the ancilla qubit being in state $\ket{1}$. The ancilla qubit is then measured in the $X$ basis. 

If $P$ is an automorphism on $G$, then the graph state is invariant under this permutation, and so the controlled permutation operator acts as the identity, and so the measurement must result in $+1$. If, on the other hand, $P$ is not an automorphism on $G$, the state of the system prior to measurement is $\frac{1}{\sqrt{2}}\left(\ket{0}\otimes\ket{G}+\ket{0}\otimes\ket{G'}\right)$, where $G'$ is the graph obtained by applying $P$ to $G$. In this case, the $X$ measurement results in $-1$ with probability given by $\frac{1-|\braket{G'}{G}|^2}{2}$. As before, since $|\braket{G'}{G}|\leq \frac{1}{\sqrt{2}}$, this comparison operation yields an incorrect result with probability at most $\frac{3}{4}$. This gives an automorphism test with one-sided error of at most $\frac{3}{4}$.

\subsubsection{Vertex comparison}

Comparing two vertices is a special case of automorphism testing, where the permutation applied is a simple pairwise swap between the vertices, and hence can be achieved using the previous procedure. However, in this case, a procedure that gives test with a one sided error of at most $1/2$ exists.

Suppose, for now, that the two vertices, $a$ and $b$ do not share an edge, and that swapping $a$ and $b$ is an automorphism on $G$. Then, the two vertices share the exact same neighbours, other than self-loops. Also, if $a$ has a self-loop, then so does $b$.
Therefore the observable $X_aX_b$ is in the stabiliser of the graph state. This can be seen by simply multiplying $K_G^a$ by $K_G^b$. Thus, a measurement of this operator is guaranteed to result in $+1$. Now, suppose swapping $a$ and $b$ is \emph{not} an automorphism. Then either $N(a)\setminus a \neq N(b) \setminus b$, or one of $a$ and $b$ has a self-loop where the other does not. In the first case, there is an element of the stabiliser of the state that anti-commutes with the observable $X_aX_b$. In such a case, $N(a) \Delta N(b)$ (here $\Delta$ represents the symmetric difference between two sets $S_1 \Delta S_2 = (S_1 \cup S_2) \setminus (S_1 \cap S_2 )$) is not empty, and for any vertex $u$ in this set, the operator $K_G^u$ necessarily acts as $Z$ on either $a$ or $b$ and as identity on the other. Thus $K_G^u$ anti-commutes with $X_a X_b$, and hence the expectation value for the measurement is $0$. On the other hand, if $N(a) \setminus a = N(b) \setminus b$, but one of $a$ and $b$ has a self-loop where the other doesn't, then $-X_aX_b$ is a stabiliser of the graph state. In either case, measuring the observable $X_aX_b$, gives the result $-1$ with probability of at least $1/2$. Hence, individual vertices can be compared with one-sided error probability of $\frac{1}{2}$.

Now consider the case where vertices $a$ and $b$ do share an edge. In this case, $X_a X_b$ is no longer in the stabiliser under any circumstance, since it anti-commutes with $K_G^a$. However, if swapping $a$ and $b$ is an automorphism, then $Y_aY_b$ is in the stabiliser of $\ket{G}$, since this is simply the product of $K_G^a$ and $K_G^b$. Hence, measuring the observable $Y_aY_b$ will determine whether swapping $a$ and $b$ is automorphism while $(a,b) \in V$. Since both tests require only two single-qubit measurements, they can be accomplished with $O(1)$ operations.

\subsection{Read-out operations}
\subsubsection{Degree parity test}
In graph theory, an \emph{Euler graph} is one whose every vertex is of \emph{even} degree, meaning that each vertex has an even number of neighbours. Similar to the previous section we can devise a simple probabilistic algorithm that tests whether a graph is Eulerian, in $O(n)$ time, although this procedure requires two copies of the graph state.

For sake of clarity, let us consider first the simplified case where the graph tested is promised to not include any self-loops. Then, it is necessarily the case that
\begin{equation}\label{eq:euler-stab}
\overline{X} = \prod_{v \in V} X_v
\end{equation}
is a stabiliser of the graph state if and only if the parity of every vertex is even. Hence, performing this measurement on an Euler graph will always give the measurement outcome $+1$. On the other hand, if the graph is not Eulerian, there exists at least one vertex $v \in V$ such that $|N(v)| \equiv 1\mod 2$. Then, the stabiliser 
\begin{equation}\label{eq:euler-nocomm}
K_{G}^{v}=(-1)^{|N(v) \cap \{v\}|}X_{v}\prod_{u\in N(v)\setminus u} Z_u.
 \end{equation}
\emph{anti-commutes} with the observable $\overline{X}$. Hence, performing the measurement in Eq.\ \ref{eq:euler-stab} on a non-Euler graph state gives the measurement outcomes $+1$ and $-1$ with equal probability. Hence, by performing said measurement on the graph state, we can distinguish these two cases with one-sided error of $\frac{1}{2}$. This gives an $O(n)$ run-time algorithm.

Now, let us consider the general case in the (possible) presence of self-loops. A self-loop adds two to the degree of the vertex to which it is incident. This entails that a multigraph $G$ with self-loops is Eulerian if and only if the \emph{simple} graph $G'$ obtained from $G$ by removing all self-loops is Eulerian. Recall that a self-loop on a vertex $v$ is represented in our quantum data structure simply as a Pauli $Z$ operator acting on $v.$ Hence, an Euler graph $G$ has a stabiliser of the form
\begin{equation}\label{eq:euler-self-loops}
	\prod_{v \in V} (-1)^s X_v,
\end{equation}
where $s$ is the number of self-loops in $G.$ Hence, in the general case one can still proceed with the procedure outlined above of measuring the observable in Eq.\ \ref{eq:euler-stab} on two copies of $G$. If the graph is Eulerian, then the measurement results will be consistently either $+1$ or $-1$---depending on $s$, but always one or the other on all copies of $G$. And, for non-Euler graphs the measurement result will be randomly and independently $+1$ or $-1$, resulting in mismatched results with probability $\frac{1}{2}$. Hence, this procedure provides a test of whether or not the graph is Eulerian with one-sided error of $\frac{1}{2}$. Furthermore, the procedure also gives one extra bit of information beyond whether the graph is Eulerian or not, namely the parity of $s$.

A similar algorithm for testing whether the degree of every vertex is \emph{odd} can also be devised by substituting the observable in Eq.\ \ref{eq:euler-stab} with the operator $\overline{Y} = \prod_{v \in V} Y_v$.

\subsubsection{Recovering classical representation}

Finally, we turn to the issue of recovering the encoded graph for a particular graph state $\ket{G}$. As any classical data structure for an arbitrary graph of $|V|$ vertices and $|E|$ edges requires at least $\log_2 \binom{|V|^2}{|E|}$ bits to uniquely label the graph, and to encode all graphs of $|V|$ vertices requires $|V|^2$ bits. As a result, Nayak's theorem \cite{nayak99} implies that the probability of correctly identifying the graph encoded by $\ket{G}$ is exponentially small in $|V|$. Even if multiple copies of the graph state are given, the joint states are not in general perfectly orthogonal for different graphs, and hence the encoded graph cannot be deterministically decided \cite{chefles2000quantum}. However, we now show that it is possible to recover the encoded graph with high probability given a number of copies of $\ket{G}$ linear in $|V|$.

\begin{algorithm}[t]
\caption{Graph State Readout}\label{alg:readout}
\begin{description}
\item[Input:] An expected $K = O(|V|)$ copies of the graph state $\ket{G}$ representing the graph $G$.
\item[Output:] A classical description of $G$.
\item[Steps:]
\end{description}
\begin{enumerate}
\item Let $k = 1$ and $S = \{\}$ be an empty set. Let $\mathbb{A}$, $\mathbb{B}$ and $\mathbb{C}$ be three $|V|$-qubit registers.
\newline 
\newline
\textbf{while} $|S| \neq |V|$:
\begin{enumerate}
\item Take two copies of the state $\ket{G}$, and place them in registers $\mathbb{A}$ and $\mathbb{C}$. Prepare register $\mathbb{B}$ in state $\ket{+}^{\otimes |V|}$.

\item \textbf{for} $1\leq i \leq |V|$:
\begin{enumerate}
\item Apply a $\CZ$ between the $i$th qubit of $\mathbb{A}$ and the $i$th qubit of $\mathbb{B}$ and a second $\CZ$ between the $i$th qubit of $\mathbb{B}$ and the $i$th qubit of $\mathbb{C}$.
\item Measure qubit $i$ of $\mathbb{A}$ in the $X$ basis. Record the results in a bit vector $a_k^i$.
\item\label{step:graph-linked} Measure qubit $i$ of $\mathbb{B}$ in the $X$ basis. Record the results in a bit vector $b_k^i$.  
\item Measure qubit $i$ of $\mathbb{C}$ in the $X$ basis. Record the results in a bit vector $c_k^i$.
\end{enumerate}
\item \textbf{if} $S \cup b_k$, wehre $b_k = b_k^1 \ldots b_k^{|V|}$, is linearly independent add $b_k$ to $S$ and increment $k$.
\end{enumerate}
\item Let $A$, $B$ and $C$ be square matrices such that the $k$th column of each is $a_k$, $b_k$ or $c_k$ respectively. Solve the system of linear equations given by
$$b_k^T x_i = a_k^i + c_k^i ~\text{mod}~2~\forall~i,k$$
for each $x_i$ and denote by $\Lambda$ the $|V|\times |V|$ matrix for which the $i$th column is given by $x_i$. $\Lambda$ should be a symmetric matrix.
\item Take a final copy of $\ket{G}$ and apply $\CZ$ between every unordered pair of qubits $(i,j)$ such that $\Lambda_{ij} = \Lambda_{ji} =1$. Measure the resulting state in the $X$ basis and denote the results $d_k$.
\newline
The adjacency matrix for the graph $G$ is given by $\Gamma = \Lambda + D$, where $D$ is the diagonal matrix such that $D_{ii} = d_k$.\label{step:detangle}
\end{enumerate}

\end{algorithm}

The procedure is presented as Algorithm \ref{alg:readout}. The algorithm proceeds in three main steps. First, a maximal set of linear equations over $GF(2)$ is constructed via repeated measurements of copies of the state $\ket{G}$. This is achieved by first linking two copies of the graph state using $\CZ$ operators via an intermediary set of ancilla qubits. Then all qubits are all measured in the $X$ basis. In the second step, these measurement results are used to recover the adjacency matrix (less self-edges) for $G$ using linear algebra. The final step recovers the self edges via an additional adapted measurement on $\ket{G}$.

We now analyse the number of copies of the graph state required by the algorithm. Whether an execution of the main loop adds a new vector to $S$ depends on the measurement outcomes on the ancilla qubits, stored in the bit vector $b_k$, and on $S$. Since each measurement has the same probability of resulting in either $0$ or $1$, and the number of distinct non-zero vectors which can be generated from linear combinations of the vectors in $S$ is $2^{|S|}-1$, the probability that a random set of measurements vector linearly independent from those already in $S$ is $1 - 2^{|S|-|V|}$. Hence, the expected number of times the loop must iterate in order to add a single vector to $S$ is $(1 - 2^{|S|-|V|})^{-1}$. Adding over all values of $|S|$ from $0$ to $|V|-1$ gives a total expected runtime for the first loop of
\[
\sum_{\ell=0}^{|V|-1} \frac{1}{1-2^{\ell-|V|}} \leq 2|V|.
\]
Hence, the total expected number of copies of the state $\ket{G}$ needed is bounded from above by $4|V|+1$.

The correctness of the algorithm can be verified by analysing a single run through the first loop. After step \ref{step:graph-linked} the state of unmeasured qubits is
\begin{align*}
\ket{\Phi_G} &= 
\CZ^{G} \left(\bigotimes_{j=1}^{|V|} X^{b_k^j}\right) H^{\otimes |V|} \left(\bigotimes_{i=1}^{|V|} X^{a_k^i}\right) H^{\otimes |V|} \CZ^{G}\ket{+}^{\otimes |V|}\\ 
&=  \overline{Z}^{N(b_k)} \left(\bigotimes_{i=1}^{|V|} Z^{a_k^i}\right)\ket{+}^{\otimes |V|},\\
\end{align*}
where $\CZ^G$ represents the product of $\CZ$ operations wherever there is an edge in $G$ (i.e. $\ket{G} = \CZ^{G} \ket{+}^{\otimes |V|}$), and $\overline{Z}^{N(b_k)} = \prod_i {Z}^{N(b_k^i)}$ where ${Z}^{N(b_k^i)}$ represents a $Z$ operator on all neighbours in $G'$ of vertex $i$ if and only if $b_k^i = 1$. The graph $G'$ is taken to be the result of removing all self-edges from $G$.

Hence, measuring $\ket{\Phi_G}$ and recording the results as $c_k$ gives results in
\begin{align*}
c_k^i =& b_k^T x_i + a_k^i~\text{mod }2
\end{align*}
where the vector $x_i$ is defined such that $x_i^j = 1$ if and only if $j$ is a neighbour of $i$ in $G'$. Thus, $\Lambda$ is the adjacency matrix of $G'$. Since $b_k$ are linearly independent, the system of equations has a unique solution, and hence $\Lambda$ can always be found.

Step \ref{step:detangle} determines the location of the self-edges in $G$ by removing all edges in common between $G$ and $G'$. The resulting state is a tensor product of $X$ eigenstates, such that $d_k = 1$ if and only if vertex $k$ has a self-edge. Thus $D$ is the adjacency matrix for the graph obtained from $G$ by removing all edges which join vertices to other vertices, leaving only self-edges. Combining $D$ and $\Lambda$ by addition, then, results in the adjacency matrix for $G$. 

\section{Comparison to classical data structures}

Introducing a quantum data structure for classical information raises the obvious question of whether it presents any advantages over using traditional classical data structures---in the case of graphs, a quantum data structure would be expected to offer advantages over adjacency matrices, incidence lists and other classical graph representations. It is certainly the case that the use of graph states is competive with the most widely studied classical graph data structures in terms of the operations studied here. In particular graph states offer more efficient comparison and complementation operations than either incidence matices, incidence lists, adjacency lists or adjacency matrices, while at the same time remaining more space efficient. However, the set of operations implementable with a single copy of a graph state are more limited than those allowed by such classical structures, and it is natural to ask whether some specialized classical structure could offer similar or better performance for the set of operations admitted by graph states. We now answer this question in the negative.

\begin{theorem}
No classical graph data structure requiring $O(N^{2-\epsilon})$ space can allow for comparison between arbitrary graphs of $N$ vertices with error bounded below $\frac{1}{2}$ for any constant $\epsilon>0$.
\end{theorem}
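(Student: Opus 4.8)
The plan is to establish this as a pigeonhole bound, exploiting the fact that a space-bounded classical encoding can store only a limited number of distinct graphs, while any correct comparison procedure must be able to distinguish a graph from itself versus from a distinct graph that happens to share the same encoding.

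First I would fix the model precisely. A classical data structure of size $S$ is a deterministic map $f$ from the set of $N$-vertex graphs into $\{0,1\}^S$, and a comparison is a (possibly randomized) procedure $C$ that, given the stored representations $f(G_1)$ and $f(G_2)$, outputs a verdict, equal or unequal. The correctness hypothesis is that for every input pair the verdict is correct with probability strictly greater than $\frac12$ (equivalently, error strictly below $\frac12$).

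Second, I would count. There are $2^{\binom{N}{2}} = 2^{\Theta(N^2)}$ labelled simple graphs on $N$ vertices, while $f$ takes at most $2^S$ distinct values. If $S = O(N^{2-\epsilon})$ for some constant $\epsilon > 0$, then for all sufficiently large $N$ we have $2^S < 2^{\binom{N}{2}}$, so the pigeonhole principle forces two distinct graphs $G_1 \neq G_2$ with $f(G_1) = f(G_2)$. The contradiction is then immediate: correctness applied to $G_1$ compared against itself gives that $C\big(f(G_1),f(G_1)\big)$ returns equal with probability $>\frac12$, but since $f(G_1)=f(G_2)$ the experiment $C\big(f(G_1),f(G_2)\big)$ is literally identical, so it too returns equal with probability $>\frac12$ and hence errs on the distinct pair $(G_1,G_2)$ with probability $>\frac12$, violating the hypothesis. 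A single collision already suffices, so no averaging over inputs is needed.

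The step that requires the most care — and the main conceptual obstacle — is pinning the model down so that the statement is in fact true. The counting step is valid only because $f$ is deterministic; the collision argument fails if the structure is permitted private randomness at storage time, since in that simultaneous-message setting fingerprinting-style encodings of size $\Theta(N)$ — of order $\sqrt{m}$ for an $m=\binom{N}{2}$-bit graph description — already suffice to test equality with bounded error. I would therefore make explicit that a data structure here means a deterministic representation of the graph, which is both the standard reading and the fair point of comparison with the graph-state encoding $\ket{G}$ (itself a deterministic function of $G$ occupying only $N$ qubits, yet supporting comparison via the swap test of the previous section). The randomness of the comparison procedure $C$, by contrast, is harmless and is handled by the argument above verbatim, precisely because a collision renders the two instances indistinguishable as inputs to $C$.
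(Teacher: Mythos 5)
Your proof is correct and takes essentially the same route as the paper, which likewise argues by pigeonhole: there are $2^{\Theta(N^2)}$ labelled graphs on $N$ vertices but only $2^{O(N^{2-\epsilon})}$ possible encodings, so two distinct graphs must collide and are thereafter indistinguishable to any comparison procedure. Your two additions---spelling out explicitly why a collision forces error at least $\frac{1}{2}$ on the pair $(G_1,G_2)$ given correctness on $(G_1,G_1)$, and noting that the theorem requires the encoding to be deterministic (since randomized fingerprints of size $O(N)$ would otherwise permit bounded-error equality testing)---are both sound and sharpen the paper's terse assertion that the colliding graphs are ``hence indistinguishable.''
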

 \begin{proof}
The proof of this statement follows from a simple counting argument. There are $N^2$ possible edges within a graph, and each of these can be either present or absent, leading to a total of $2^{N^2}$ graphs of $N$ vertices. If the graph is represented using a string of $n$ bits, for $n<N^2$, then there must exist a single string which corresponds to $\lceil 2^{N^2-n}\rceil$ distinct graphs. These graphs are hence indistinguishable. Thus, only representations requiring at least $N^2$ bits can allow for unambiguous discrimination of all graphs even with bounded error.
 \end{proof}
 
\begin{theorem}
No classical graph data structure can have both an edge complementation operation and a vertex comparison operation that require only $O(1)$ elementary gates.
\end{theorem}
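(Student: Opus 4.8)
The plan is to reduce the pair of graph operations to a purely abstract dynamic data-structure problem and then invoke a cell-probe lower bound. First I would isolate a family of graphs in which vertex comparison collapses to testing equality of long bit strings. Reserve a ground set of vertices $w_1,\dots,w_n$ together with ``record'' vertices $a_1,\dots,a_n$, and maintain the invariant that all record vertices are pairwise non-adjacent and carry no self-loops, and that ground vertices are adjacent only to record vertices. Under this invariant each $N(a_i)$ is a subset of the ground set, an edge complementation $\CZ_{a_i,w_j}$ flips the single membership bit $[\,w_j\in N(a_i)\,]$, and, since the records are loop-free and mutually non-adjacent, a vertex comparison of $a_i$ and $a_{i'}$ returns ``equal'' exactly when $N(a_i)=N(a_{i'})$. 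Thus a data structure performing edge complementation and vertex comparison in $O(1)$ elementary gates would solve, with $O(1)$ probes per operation, the dynamic problem of maintaining an $n\times n$ boolean matrix under point flips while answering row-equality queries. The constant number of self-loop and $(a,b)$-diagonal bits that enter the general comparison rule can be read off directly, so the reduction survives for arbitrary records.

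Second, I would prove that this dynamic row-equality problem cannot be solved with $O(1)$ update and query time. The engine for the contradiction is that a single point flip can change many query answers at once: starting from a configuration in which $k$ record rows are identical, flipping one bit of one of them makes that row differ from all $k-1$ others, so a single $O(1)$-cell update must simultaneously flip the outcome of $k-1$ equality queries, each of which itself reads only $O(1)$ cells. This tension alone does not close, since the affected queries may all share one written cell, so I would make it quantitative using the information-transfer / chronogram technique of Fredman and Saks: partition a sequence of point updates into geometrically growing epochs and argue, via an encoding argument, that a later equality query must read from each epoch a cell written during that epoch. With $\Theta(\log n)$ epochs this forces query time $\Omega(\log n/\log\log n)=\omega(1)$ whenever update time is $O(1)$, contradicting the reduction.

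The main obstacle, and the reason the model must be pinned down carefully, is that the abstract equality problem \emph{does} admit an $O(1)$ solution once randomness and one-sided error are allowed: storing a single random linear fingerprint bit $g_v=\langle r,\mathbf{1}_{N(v)}\rangle$ per vertex lets a flip update two fingerprint bits and lets a comparison compare two fingerprint bits, reproducing the error-$\tfrac12$ guarantee of the quantum comparison test. Consequently the lower bound can hold only for \emph{deterministic, zero-error} operations, and the substantive work is to carry out the chronogram encoding argument in that model while correctly accounting for \emph{adaptive} probing, where the cells inspected by a query depend on the stored contents. Handling this adaptivity, rather than the reduction itself, is where the real difficulty lies; an instructive warm-up is the two-row special case, which already reduces to maintaining a $\pm 1$ counter under increments with a zero-test and exhibits the same bit-probe obstruction in miniature.
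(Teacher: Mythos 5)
Your reduction to dynamic row-equality is sound, and your fingerprint observation is genuinely valuable: a shared random string $r$ with one stored parity bit $g_v=\langle r,\mathbf{1}_{N(v)}\rangle$ per vertex gives $O(1)$-bit updates and comparisons with one-sided error $\tfrac12$ (amplifiable to any constant with $O(1)$ parallel fingerprints), so the claim can only hold for deterministic structures --- and this counterexample even satisfies the intersection constraints that the paper's own proof imposes, exposing an implicit determinism assumption in the theorem as stated. However, the heart of your argument --- the $\omega(1)$ cell-probe lower bound for maintaining an $n\times n$ boolean matrix under point flips with row-equality queries --- is never actually established. You invoke the Fredman--Saks chronogram technique and then defer precisely the step where it could fail: equality queries return a single bit each, so the standard information-transfer accounting (which needs query answers to reveal many bits about each epoch's updates) does not apply off the shelf, there is no published bound for this specific problem you can cite, and you yourself concede that handling adaptive probing ``is where the real difficulty lies.'' As written, the second half is a research program, not a proof.

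The gap is also unnecessary, because you are attempting a strictly stronger theorem than the statement requires. ``$O(1)$ elementary gates'' means each operation instance is a fixed constant-size circuit, so the set of bits it can touch --- $S_e(A,B)$ for complementation, $S_c(A,B)$ for comparison --- is an $O(1)$-size set determined by the vertex labels alone, not by the stored contents; the adaptivity you identify as the main obstacle is absent from the model by definition. The paper exploits exactly this: since complementing an edge $(A,X)$ can change whether swapping $A$ and $B$ is an automorphism, $S_c(A,B)$ must intersect $S_e(A,X)$ for every $X$, which forces the union of all bits consulted by comparisons to be only linear in $N$; yet answering all swap-automorphism queries correctly requires the structure to encode the partition of vertices into interchangeable classes, and counting partitions into pairs gives $\log_2\bigl(N!/(2^{N/2}(N/2)!)\bigr)=\Omega(N\log N)$ bits by Stirling --- an immediate contradiction. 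Had you restricted attention to this non-adaptive deterministic gate model, your own row-equality family (records pairwise non-adjacent and loop-free, so comparison is exactly neighborhood equality) combined with a counting bound of this kind would close the argument in a few lines, with no chronogram machinery needed.
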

 \begin{proof}
 We proceed by contradiction. Consider a graph of $N$ vertices. If complementing an edge requires $O(1)$ gates, this means that every complementation operator, regardless of input, acts on at most $k$ bits for some constant $k$. Likewise, if vertex comparison requires $O(1)$ gates, then every comparison operator, regardless of input, acts on at most $c$ bits for some constant $c$. We label the $k$-bit set that the complementation operator acts upon, when input the vertices $A$ and $B$ as $S_e(A,B)$. Similarly, the $c$-bit set that the comparison operator acts upon when input the vertices $A$ and $B$ we will call $S_c(A,B)$.
 
Now, when an edge incident to vertex $A$ is complemented, it can change whether swapping it with $B$ is a graph automorphism (except for the case where the edge being complemented is also incident to $B$). Therefore, the set $S_c(A,B)$ must include at least one of the $k$ bits in set $S_e(A,X)$ for every vertex $X$. 
Hence, the average number of times a bit in $S_c(A,X)$ is in $S_e(A,Y)$, taken over all values of $X$ and $Y$, is at least $(N-1)/c$. Although the sum of the cardinality of the $S_e$ sets is $k{N \choose 2}$, the previous constraint implies that the cardinality of the union of $S_e(A,B)$ over all $A$ and $B$ is linear in $N$, and hence the number of bits examined by the comparison operations is also linear in $N$. 

We now show that this is too few bits to determine whether swapping a pair of vertices is a graph automorphism, by counting the number of ways in which vertices can be partitioned such that swapping vertices within a partition constitutes an automorphism on the graph. If there are $m$ distinct types of vertices, then the number of ways to partition $N$ vertices into $m$ sets is giving by
 
 \begin{equation}
 \frac{N!}{n_1!n_2!....n_m!m!}.
 \end{equation}
Thus $\log_2 \frac{N!}{n_1!n_2!....n_m!m!}$ bits are required to determine the partitioning uniquely, which is required in order to correctly determine which swaps are automorphisms.  Consider the case when $n_i=2$ for $i=1,2,...m$, then we have
  \begin{align}
  \log_2 \frac{N!}{n_1!n_2!....n_m!m!} 
  &= \log_2 \frac{N!}{2^{N/2} (\frac{N}{2})!}.
  \end{align}
Expanding this equation using Stirling's approximation, we find that the dominant term scales is $\frac{N}{2}\log_2 \frac{N}{2}$, and hence the number of bits required to describe such a partitioning scales superlinearly in $N$. Hence we have a contradiction with our earlier requirement that this quantity scale linearly in $N$.
\end{proof}

Similar dichotomy results can be proven for other operations on classical data structures which are permitted on graph states.

\section{Conclusions}

Graph states, with their close connection to graphs, represent a class of quantum states for which very naturally embody a mathematical structure common to many computational tasks. Here we have shown that this representation allows for efficient manipulation of this structure, and hence such states can be used as a data-structure which outperforms any possible classical counterpart. While the range of operations are more limited than in the case of classical data-structures, due to linearity constraints, we believe that this approach offers a new primitive for quantum algorithms.

Most quantum data structures have a classical counterpart upon which they are based, \emph{e.g.} the qubit is the quantum analogue of the classical bit. In contrast, there is no classical analogue of quantum graph state. We believe this motivates the further study of (intrinsically) quantum data structures. It is possible that further study into these structures may uncover further problems for which---like factoring---quantum computers offer a definite advantage.

\section{Acknowledgements}

The authors thank Simon Perdrix for useful discussions, and Yingkai Ouyang and Joshua Kettlewell for helpful comments on the manuscript. This material is based on research funded by the Singapore National Research Foundation under NRF Award NRF-NRFF2013-01.
 
 \bibliographystyle{apsrev}
 \bibliography{mybib}

\end{document}